\documentclass{IEEEtran}



\usepackage{amssymb}
\usepackage{amsthm}
\usepackage{cite}
\usepackage[multiple]{footmisc}
\usepackage{mathtools}
\usepackage{enumitem}
\usepackage{color}

\mathtoolsset{showonlyrefs}


\newcommand{\R}{\mathbb{R}}

\renewcommand{\P}{\mathbb{P}}

\newcommand{\ones}{{\bf 1}}
\newcommand{\myDiag}{\textup{diag}\,}

\newcommand{\BRhat}{\widehat{\text{BR}}{}^\lambda}
\newcommand{\ND}{\text{ND}}




   \def\vx{{\bf x}}


\newtheorem{theorem}{Theorem}

\newtheorem{proposition}[theorem]{Proposition}
\newtheorem{remark}[theorem]{Remark}

\newtheorem{definition}[theorem]{Definition}

\title{Smooth Fictitious Play in $N\times 2$ Potential Games}

\author{Brian Swenson and H. Vincent Poor
\thanks{\noindent This work was partially supported by the Air Force Office of Scientific Research under MURI Grant FA9550-18-1-0502. 
\newline
The authors are with the Department of Electrical Engineering, Princeton University, Princeton, NJ 08540 (bswenson@princeton.edu and poor@princeton.edu),\newline
}}




\begin{document}

\maketitle

\begin{abstract}
The paper shows that smooth fictitious play converges to a neighborhood of a pure-strategy Nash equilibrium with probability 1 in almost all $N\times 2$ ($N$-player, two-action) potential games. The neighborhood of convergence may be made arbitrarily small by taking the smoothing parameter to zero. Simple proof techniques are furnished by considering \emph{regular potential games}.
\end{abstract}

\begin{IEEEkeywords}
Game theory, potential games, smooth fictitious play, online learning, pure Nash equilibria
\end{IEEEkeywords}

\section{Introduction}
Potential games, originally introduced in \cite{monderer1996potential}, have widespread application across economics, engineering, and computer science \cite{marden2009cooperative,stankovic2011distributed,sandholm2001potential}.
In a potential game, there exists a potential function which all players implicity wish to optimize. The existence of a potential function has many benefits, not least of which is that it guarantees the existence of pure Nash equilibria (NE).
Pure NE are deterministic, stable, and maximize the potential function (locally). Consequently, in game-theoretic learning applications (particularly in those related to decentralized control), pure NE are typically vastly preferable to their mixed counterparts.

Another important benefit of potential games is that they are amenable to game-theoretic learning processes. The potential function serves as a Lyapunov function that ensures that virtually any reasonable learning algorithm will converge to the set of NE. However, many game-theoretic learning algorithms operate by evolving in the space of mixed strategies, and for these algorithms it is generally not clear if or when the limit point of an algorithm will be a pure vs mixed NE.

Common examples of algorithms relying on mixed-strategy adaptation include
the exponential weights algorithm \cite{freund1999adaptive,heliou2017learning}, regret matching \cite{hart2000simple}, actor-critic algorithms \cite{leslie2006generalised}, gradient-descent based algorithms \cite{shamma2005dynamic}, fictitious-play based algorithms \cite{shamma2005dynamic,lambert2005fictitious,swenson2017single,
swenson2015empirical,swenson2012distributed,swenson2015asynchronous,swenson2017exponential,swenson2018distributed,swenson2017robustness}, or, in continuous-time, Brown-von Neumann-Nash dynamics \cite{hofbauer2003evolutionary}, replicator dynamics \cite{hofbauer2003evolutionary} and best-response dynamics \cite{matsui1992best,swenson2018best,swenson2018Asilomar}. In such algorithms, convergence to mixed NE in potential games can be highly problematic.

In practice, there is a general understanding among practitioners that most reasonable learning dynamics \emph{ought} to converge to pure NE in potential games \cite{arslan2007autonomous}. However, a rigorous understanding of this issue, in general, is lacking.

In this paper we address this deficiency for the special case of \emph{smooth fictitious play} (smooth FP) \cite{fudenberg1993learning,fudenberg1995consistency,
benaim2013consistency,hernandez2014selective} in $N\times 2$ potential games. Fictitious play (FP), introduced in \cite{brown1951iterative}, is a canonical algorithm for learning in games. Smooth FP is a stochastic variant of FP that has been shown to achieve no-regret learning and is useful for multi-agent online learning \cite{fudenberg1995consistency,roughgarden2018lectures}.
In potential games, smooth FP is known to converge to a neighborhood of the set of NE \cite{hofbauer2002global}. However, the issue of showing generic convergence to pure NE (or a neighborhood thereof) has not yet been resolved.

As the main result of this paper, we show that in almost all $N\times 2$ potential games, smooth FP converges to the neighborhood of a pure-strategy NE with probability 1. (See Theorem \ref{thrm:main-result} for our main result and Section \ref{sec:regular-games} for a definition of ``almost all potential games'')

We are able to obtain a simple proof of this result by considering the notion of a  \emph{regular potential game} \cite{swenson2017regular} (see also \cite{harsanyi1973oddness,van1991stability}). In a regular game, all equilibria are ``nondegenerate'' in an appropriate sense that makes them well suited for game-theoretic learning applications (see Section \ref{sec:regular-games} for more details). This allows for a simple proof of our main result within the class of regular potential games.

The recent work \cite{swenson2017regular} showed that almost all potential games are regular (see also Section \ref{sec:regular-games} below). Thus, using \cite{swenson2017regular}, the results we derive here for smooth FP in regular potential games immediately extend to almost all potential games. This is the key idea of our approach.

We remark that regular potential games were used to study continuous-time best-response (CT-BR) dynamics in \cite{swenson2018best}. CT-BR dynamics may be viewed as the continuous-time version of standard fictitious play. In \cite{swenson2018best} it was shown that CT-BR dynamics converge to pure NE from almost all initial conditions in almost all potential games. Our work here shows an analogous result for smooth FP; however, the proof strategy is much simpler in this case due to the smoothness of the underlying dynamical system.

We also remark that, while we focus on games, the algorithms we consider fall under the general umbrella of dynamical systems for optimizing discrete problems, e.g., \cite{absil2004continuous,vidyasagar1993minimum,lambert2005fictitious}, and are applicable in this domain, for example, as parallel processing algorithms.


The remainder of the paper is organized as follows. Section \ref{sec:notation} sets up notation. Section \ref{sec:smooth-FP} introduces smooth FP. Section \ref{sec:regular-games} introduces regular potential games and the notion of ``almost all games'' in this context. Section \ref{sec:ND-and-NE} elucidates the relationship between the limit points of smooth FP and the set of NE and presents our main result (see Theorem \ref{thrm:main-result}). Section \ref{sec:conclusion} concludes the paper.

\section{Notation}\label{sec:notation}
A normal form game is given by the tuple $\Gamma = (N,(A_i,u_i)_{i=1}^N)$, where $N$ is the number of players, $A_i=\{a_i^1,\ldots,a_i^{K_i}\}$ is the action set of player $i$ (assumed to be finite) and $u_i:A_1\times\cdots\times A_N\to\R$ is the utility function of player $i$. Given $i\in\{1,\ldots,N\}$ we let $A_{-i} = \prod_{j\not= i} A_j$.

The main results of this paper will focus on $N\times 2$ games; i.e., games with $N$ players and two actions per player.
Unless otherwise stated, we will assume through the remainder of the paper that we are dealing only with $N\times 2$ games.\footnote{We note however, that with the exception of Proposition \ref{prop:ND-hyperbolic} and Theorem \ref{thrm:main-result}, all results are valid for games of arbitrary size.}

A game is said to be a potential game, as introduced in \cite{monderer1996potential}, if there exists a function $u:A_1\times\cdots\times A_N\to\R$ such that
$u(a_i,a_{-i}) - u(a_i',a_{-i}) = u_i(a_i,a_{-i}) - u_i(a_i',a_{-i})$ for all $a_i,a_i'\in A_i$, and $a_{-i}\in A_{-i}$.
In this paper we will focus on games with identical interests, i.e., games where $u_i = u$ for all $i$, as these games are strategically equivalent to potential games and the dynamical systems considered in this paper behave identically on a potential game or an associated identical-interests game.

To properly define smooth fictitious play we must consider the set of \emph{mixed strategies} where players mix probabilistically between actions. Let $\Delta_i$ denote the set of probability distributions over the action set $A_i$. Since we assume that $\Gamma$ is an $N\times 2$ game, in an abuse of notation we will represent a strategy $x_i\in \Delta_i$ as single real number $x_i\in [0,1]$, which is interpreted as the probability mass placed on the action $a_i^1$.
The set of \emph{joint} mixed strategies is given by $\Delta := \Delta_1\times\cdots\times \Delta_N$, which may be viewed as the hypercube $\Delta = [0,1]^N$.

Given a mixed strategy $x\in \Delta$, we define
$$
U(x) := \sum_{\substack{i=1,\ldots,N\\ k_i=1,2}} z_1^{k_1}(x_1)\cdots z_N^{k_N}(x_N) u(a_1^{k_1},\ldots,a_N^{k_N})
$$
where $z_i^1(x_i) = x_i$ represents the probability of player $i$ playing action $a_i^1$ and $z_i^2 = (1-x_i)$ represents the complementary probability. We emphasize that $U(x)$ is simply the expected value of the potential given the mixed strategy $x\in \Delta$.
In an abuse of notation, we will use $U(a_i^k,x_{-i})$ to represent the expected potential of action $a_i^k\in A_i$ under the mixed strategy $x_{-i}\in \prod_{j\not= i} \Delta_j$.

The set of Nash equilibria is given by
$$
NE := \{x\in \Delta: U(x_i,x_{-i}) \geq U(x_i',x_{-i}) \text{ for all } x_i' \in \Delta\}
$$

Smooth fictitious is defined using the notion of a smoothed best response. Formally, the smoothed (or logit) best response is given by
\begin{equation} \label{eq:BR-def}
\BRhat_i(x) := \frac{\exp(\frac{1}{\lambda} U(a_i^1,x_{-i}))}{\sum_{k=1,2} \exp(\frac{1}{\lambda} U(a_i^k,x_{-i}))},
\end{equation}
for smoothing parameter $\lambda>0$.

We remark that, since we focus on $N\times 2$ games, we have simplified our notation. In general, $\BRhat_i(x)$ is an element of $\Delta_i$ specifying the weight placed on each action in $A_i$. Since we treat $\Delta_i$ to be $[0,1]$ here, $\BRhat_i(x)$ should be interpreted as the weight placed on $a_i^1$. The weight placed on $a_i^2$ is simply obtained as $1-\BRhat_i(x)$.

Note that as $\lambda\to 0$ the perturbed best response approaches a probability distribution uniformly distributing its mass on actions that maximize the potential function.
Similarly, we define the joint smoothed best response as
$$
\BRhat(x) := \BRhat_1(x)\times\cdots\times \BRhat_N(x).
$$
We remark that for all $x\in\Delta$, $\BRhat(x)$ is unique and lies in the interior of $\Delta$.

Following \cite{fudenberg1998theory} we refer to a fixed point of $\BRhat$ as a \emph{Nash distribution} (with parameter $\lambda$), and denote the set of Nash distributions by
\begin{equation} \label{def:ND}
\ND(\lambda) := \{x:x=\BRhat(x)\}.
\end{equation}
The set $\ND(\lambda)$, $\lambda>0$ is not the same as the set of NE. However, we will see that in regular potential games, convergence of the set of Nash distributions to the set of NE does occur as $\lambda\to 0$ (see Theorem \ref{thrm:ND-to-NE} below).


Finally, as a matter of notation when we say that a function is of class $C^k$, $k\geq 1$ we mean that it is $k$ times continuously differentiable. Given a function $F:\R^d\to\R^d$, we use the notation $DF(x)$ to denote the Jacobian of $F$ at $x$. Given a set of scalars $a_1,\ldots,a_k$, we let $\myDiag(a_1,\ldots,a_k)$ denote the $k\times k$ matrix with entries $a_1,\ldots,a_k$ on the diagonal.

\section{Smooth FP} \label{sec:smooth-FP}

\subsection{Smooth fictitious Play}
Suppose players repeatedly play some fixed game $\Gamma$. For $n\geq 0$, let $a_i(n)$ denote the action taken by player $i$ in round $n$, and let $x_i(n) = \frac{1}{n}\sum_{s=1}^n \ones_{\{a_i(s) = a_i^1\}}$ represent the \emph{empirical distribution} of player $i$, where $\ones_{\{a_i(s) = a_i^1\}} = 1$ if $a_i(s) = a_i^1$ and $\ones_{\{a_i(s) = a_i^1\}} = 0$ otherwise.

Smooth fictitious play is defined as follows.
Let the initial action $a_i(0)$ be chosen arbitrarily. For $n\geq 0$, players choose their next-stage action according to the probabilistic rule
\begin{equation} \label{eq:main-algo-1}
\P(a_i(n+1) = a_i^k) = \frac{\exp(\frac{1}{\lambda} U(a_i^k,x_{-i}(n)))}{\sum_{m=1}^{2} \exp(\frac{1}{\lambda} U(a_i^k,x_{-i}(n)))},\\
\end{equation}
or equivalently, $\P(a_i(n+1) = a_i^1) = \BRhat_i(x(n))$ and $\P(a_i(n+1) = a_i^2) = 1-\BRhat_i(x(n))$.
The empirical distribution is updated as
\begin{equation} \label{eq:main-algo-2}
x_i(n+1) = x_i(n) + \frac{1}{n+1}\left( a_i(n+1) - x_i(n) \right).
\end{equation}

We refer to the update procedure \eqref{eq:main-algo-1}--\eqref{eq:main-algo-2} as smooth FP.\footnote{More general notions of smooth FP are considered in \cite{fudenberg1995consistency}.} Smooth FP is known to converge to the set $\ND(\lambda)$ in several classes of games including potential games \cite{hofbauer2002global}.

\subsection{Smoothed Best-Response Dynamics}
The long run behavior of the state in smooth FP (considered to be the empirical distribution \eqref{eq:main-algo-2}) is determined by the ODE
\begin{equation} \label{eq:BR-perturbed}
\dot \vx(t) = \BRhat(\vx(t)) - \vx(t),
\end{equation}
where $\vx:[0,\infty) \to \R^N$.

\begin{remark}
We remark that, in contrast to $\BRhat$, the standard best response (which is approximated by $\BRhat$ as $\lambda\to 0$)
is set valued and discontinuous. The properties of genuine best response dynamics in potential games are studied in \cite{swenson2018best}.
\end{remark}

The following definition, standard from dynamical systems theory, will be central to our treatment.
\begin{definition}[Hyperbolic rest point]
Consider a differential equation
\begin{equation} \label{eq:generic-ode}
\dot \vx = F(\vx)
\end{equation}
where $F:\R^n\to \R^n$ is $C^1$. A rest point $x$ of \eqref{eq:generic-ode}
is said to be \emph{hyperbolic} if the Jacobian $DF(x)$ is nonsingular.
\end{definition}

We will see that smooth FP will only converge to Nash distributions that are stable under \eqref{eq:BR-perturbed}, as defined below.
\begin{definition} [Linearly stable point]
We say that a rest point of \eqref{eq:generic-ode} is linearly stable if all eigenvalues of $DF(x)$ have negative real part.
\end{definition}

The following theorem from \cite{hofbauer2002global} characterizes the limit points of \eqref{eq:main-algo-1}--\eqref{eq:main-algo-2} in terms of the rest points of \eqref{eq:BR-perturbed}.
\begin{theorem} \label{thrm:sandholm}
If all rest points of \eqref{eq:BR-perturbed} are hyperbolic, then with probability 1, the set of limit points of \eqref{eq:main-algo-1}--\eqref{eq:main-algo-2} is the set of linearly stable Nash distributions.
\end{theorem}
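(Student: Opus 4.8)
The plan is to treat the empirical-distribution recursion \eqref{eq:main-algo-1}--\eqref{eq:main-algo-2} as a stochastic approximation (Robbins--Monro) scheme and to tie its long-run behavior to the flow of \eqref{eq:BR-perturbed} via the ``ODE method.'' First I would rewrite \eqref{eq:main-algo-2} in canonical form
\[
x(n+1) = x(n) + \gamma_{n+1}\big( \BRhat(x(n)) - x(n) + M_{n+1} \big),
\]
with step size $\gamma_{n+1} = \tfrac{1}{n+1}$ and noise $M_{n+1} := a(n+1) - \BRhat(x(n))$, where $a(n+1)$ is the vector of realized action indicators (the $i$-th entry being $1$ if player $i$ plays $a_i^1$). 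By the sampling rule \eqref{eq:main-algo-1} we have $\E[a(n+1)\mid \calF_n] = \BRhat(x(n))$, so $M_{n+1}$ is a bounded martingale-difference sequence and the deterministic drift is exactly the right-hand side of the smoothed best-response dynamics \eqref{eq:BR-perturbed}.

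Since $\sum_n \gamma_n = \infty$, $\sum_n \gamma_n^2 < \infty$, and the noise is bounded (hence the martingale $\sum_n \gamma_n M_{n+1}$ converges a.s.), the interpolated process is, with probability one, an asymptotic pseudotrajectory of the flow generated by $F(x) = \BRhat(x) - x$ (Bena\"{\i}m). Standard stochastic-approximation theory then yields that the limit set $L$ of $\{x(n)\}$ is a.s. nonempty, compact, connected, and internally chain recurrent for this flow. The next step is to identify that chain-recurrent set. The rest points of \eqref{eq:BR-perturbed} are precisely the fixed points of $\BRhat$, i.e.\ the Nash distributions $\ND(\lambda)$. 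Because we work throughout with potential (identical-interests) games, the dynamics admits the strict Lyapunov function $V(x)=U(x)+\lambda\sum_{i=1}^N H(x_i)$, where $H(p)=-p\ln p-(1-p)\ln(1-p)$; here $\BRhat$ is the entropy-regularized maximizer, so $V$ increases strictly along nonstationary trajectories. Consequently every internally chain-recurrent set consists solely of rest points, and hyperbolicity makes these rest points isolated, hence finite in number. A connected subset of a finite set is a singleton, so $L$ is a single Nash distribution and smooth FP converges almost surely to a point of $\ND(\lambda)$.

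The final, and hardest, step is to exclude convergence to Nash distributions that are not linearly stable, and to establish the converse inclusion. A hyperbolic rest point $x$ that is not linearly stable has Jacobian $D\BRhat(x) - I$ with an eigenvalue of positive real part, hence a nontrivial unstable manifold; I would invoke the nonconvergence theorems for stochastic approximation (Pemantle; Brandi\`ere--Duflo; Bena\"{\i}m--Hirsch), which guarantee $\P(x(n)\to x)=0$ \emph{provided} the conditional noise covariance is bounded below along the unstable subspace. Verifying this excitation hypothesis is the main obstacle, and it is where the structure of smooth FP is decisive: because $\BRhat(x)$ always lies in the interior of $\Delta$, every rest point is interior and, near it, each coordinate of $a(n+1)$ is an independent Bernoulli draw with parameter $p_i=\BRhat_i(x(n))$ bounded away from $0$ and $1$. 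Thus $\textup{Cov}(M_{n+1}\mid\calF_n)=\myDiag\big(p_1(1-p_1),\ldots,p_N(1-p_N)\big)$ is uniformly positive definite in a neighborhood of any rest point, supplying excitation in every direction, in particular the unstable ones. This gives a.s.\ nonconvergence to linearly unstable rest points, leaving only linearly stable Nash distributions as possible limits. For the reverse inclusion, a local-attraction argument combined with accessibility (since all actions have positive probability at every stage, the empirical distribution can be steered into the basin of any prescribed linearly stable rest point with positive probability, after which it is trapped) shows that each such point is attained, so the set of limit points is exactly the set of linearly stable Nash distributions.
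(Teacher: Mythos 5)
The paper does not prove this theorem at all: it is imported verbatim from \cite{hofbauer2002global} (Hofbauer and Sandholm), so there is no internal proof to compare against. Your proposal is essentially a correct reconstruction of the argument in that reference and in Bena\"{\i}m's stochastic-approximation framework: the Robbins--Monro decomposition with martingale noise $M_{n+1}=a(n+1)-\BRhat(x(n))$, the asymptotic-pseudotrajectory/chain-recurrence step, the strict Lyapunov function $U(x)+\lambda\sum_i H(x_i)$ (whose monotonicity you should justify explicitly: $U(\cdot,x_{-i})$ is affine in $x_i$ and the entropy is strictly concave, so $\BRhat_i(x)$ is the unique maximizer of $y_i\mapsto U(y_i,x_{-i})+\lambda H(y_i)$ and the gradient inequality for concave functions gives $\dot V\ge 0$ with equality only at rest points), finiteness of the hyperbolic rest-point set, and the Pemantle/Brandi\`ere--Duflo/Bena\"{\i}m--Hirsch nonconvergence theorem, whose noise-excitation hypothesis you correctly verify using the fact that $\BRhat$ maps into the interior of $\Delta$ so that $\textup{Cov}(M_{n+1}\mid\calF_n)$ is uniformly positive definite near any rest point. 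Two small cautions. First, your $H$ (binary entropy) collides with the paper's $H$ (restricted Hessian in \eqref{eq:def-H}); rename one. Second, your final ``reverse inclusion'' overreaches: entering the basin of a linearly stable rest point does not ``trap'' the noisy iterate almost surely, only with positive probability (Bena\"{\i}m's attainability result), and in any case the theorem as used in the paper only requires the forward direction --- that with probability 1 the process converges to \emph{some} linearly stable element of $\ND(\lambda)$ --- which your first two steps plus the nonconvergence argument already deliver.
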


\section{Regular Potential Games} \label{sec:regular-games}
The notion of a regular game was first introduced by Harsanyi in \cite{harsanyi1973oddness}. The main advantage of regular games is that their equilibria are robust, nondegenerate,\footnote{Nash equilibria may possess many ``quirky'' properties, including discontinuity with respect to payoffs, connected sets of NE, and others \cite{van1991stability}. We use the term ``degenerate'' imprecisely here to refer generally to such behavior. See \cite{van1991stability} for a detailed treatment.} and easy to analyze \cite{van1991stability}.

Let $x^*$ be an equilibrium of a potential game. Without loss of generality, assume that the pure strategy set is ordered so that $x_i^*> 0$ for all $i$.

An equilibrium of a potential game is regular if it satisfies two properties. First, a regular equilibrium must be quasi-strict as defined below (see also \cite{van1991stability}).\footnote{We remark that this definition has been adapted to suit $N\times 2$ games.} Second, at a regular equilibrium, the Hessian of the potential function must be ``nondegenerate.''

We now define these properties formally.
\begin{definition}
An equilibrium $x^*$ is said to be quasi strict if $x_i^* = 1$ implies that $y_i^2$ is not a pure-strategy best response to $x^*_{-i}$.
\end{definition}
Before defining the notion of nondegeneracy for the Hessian, note that an equilibrium $x^*$ may be on the boundary of the strategy space. Informally, to characterize degeneracy, we need to check the behavior of derivatives only in coordinates where the constraints are not active.

To formalize this, let $\tilde N$ denote the number of \emph{mixing} players, i.e., $\tilde N = |\{ i\in\{1,\ldots,N\}:x_i^{*,1}\not = 1\}|$, and without loss of generality, assume that the strategy set is ordered so that $x_i^*<1$, $i=1,\ldots,\tilde N$ (i.e., players $1,\ldots,\tilde N$ used genuine mixed strategies and the remaining players use pure strategies at $x^*$).
The \emph{restricted} Hessian of $U$, relative to $x^*$, is given by\footnote{We note that, in this definition, we define $H(\cdot)$ with respect to some NE $x^*$ but allow for $H(\cdot)$ to be evaluated at an arbitrary $x\in \Delta$. This is so we may evaluate $H(\cdot)$ at elements of $\ND(\lambda)$ that some NE.}
\begin{equation} \label{eq:def-H}
H(x) := \left(\frac{\partial U(x)}{\partial x_i\partial x_j} \right)_{i,j=1,\ldots,\tilde N}
\end{equation}

The notion of a regular equilibrium is now defined below.
\begin{definition}
A Nash equilibrium $x^*$ of a potential game is said to be \emph{regular} if
\begin{itemize}
\item [(i)]$x^*$ is quasi strict, and
\item [(ii)] The restricted Hessian $H(x^*)$ is invertible.
\end{itemize}
\end{definition}
\begin{remark}
We comment on the two extreme cases in the definition of regularity. Note that if an equilibrium $x^*$ is in the interior of the strategy space, then regularity simply reduces to the condition that $x^*$ is a nondegenerate critical point of $U$ in the standard sense. On the other hand, if $x^*$ is at a vertex of the strategy space, then regularity is equivalent to $x^*$ being a strict equilibrium (i.e., $u(a) > u(a_i',a_{-i})$ for all $a_i'\in A_i$ and all $i$). In the intermediate case that $x^*$ lies on a boundary of the simplex but is not a vertex, regularity may be seen as a mixture of these two conditions.
\end{remark}

We define the notion of a regular \emph{game} as follows.
\begin{definition}
A potential game $\Gamma$ is said to be regular if all equilibria in the game are regular.
\end{definition}

Regular games possess a multitude of desirable stability and robustness properties. See \cite{van1991stability} for an extensive treatment.

We would like to be able to say that ``almost all'' potential games are regular. To this end, we will now define a suitable notion of ``almost all'' in this context.
Suppose that we are given integers $N$ and $K_i$, $i=1,\ldots,N$. Consider the set of all $N$-player potential games having action spaces with cardinality $|A_i| = K_i$. Observe that any such game is uniquely defined by a payoff vector $u\in \R^{K_1\times\cdots\times K_N}$.

We say that almost all $N\times 2$ potential games satisfy a certain property if for any $N$, the set of all $N\times 2$ potential games for which the property fails to hold is a closed, measure-zero subset of $\R^{2N}$.

The following theorem from \cite{swenson2017regular} establishes that regularity is in fact a generic property within the class of potential games.
\begin{theorem}[ \hspace{-.6em} \cite{swenson2017regular}, Theorem 1] \label{thrm:reg-pot-games}
Almost all potential games are regular.
\end{theorem}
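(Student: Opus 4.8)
The plan is to parametrize the space of $N\times 2$ potential games by the payoff vector $u$ and to show that the set of $u$ for which some equilibrium fails to be regular is a finite union of sets, each cut out (after eliminating the strategy variables) by a nontrivial polynomial condition on $u$, and is therefore closed and of Lebesgue measure zero. The whole argument rides on the multilinear structure of $U$: since $U(x)$ is affine in each coordinate $x_i$, the partial derivative $\partial U/\partial x_i = U(a_i^1,x_{-i}) - U(a_i^2,x_{-i})$ is independent of $x_i$, player $i$'s indifference condition is exactly $\partial U/\partial x_i(x)=0$, and every quantity appearing below is polynomial in $(x,u)$ (and linear in $u$ for fixed $x$).

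First I would stratify the cube $\Delta=[0,1]^N$ into its $3^N$ relatively open faces, indexed by a choice, for each player, of ``fixed at $0$'', ``fixed at $1$'', or ``mixing''. Fix one such face and let $S\subseteq\{1,\ldots,N\}$ be its set of mixing players. On the relative interior of the face an equilibrium is characterized by the $|S|$ indifference equations $\partial U/\partial x_i(x)=0$, $i\in S$, together with the best-response inequalities asserting that $\partial U/\partial x_j(x)$ has the correct sign for each pure player $j\notin S$. By the definitions in Section~\ref{sec:regular-games}, such an equilibrium fails to be regular exactly when one further equality holds: either (i) quasi-strictness fails, i.e.\ $\partial U/\partial x_j(x)=0$ for some pure player $j\notin S$; or (ii) the restricted Hessian is singular, i.e.\ $\det H(x)=0$. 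In either case I would study the map
$$ G(x_S,u) = \Big(\big(\tfrac{\partial U}{\partial x_i}(x)\big)_{i\in S},\; g(x,u)\Big)\in \R^{|S|}\times\R, $$
where $g = \partial U/\partial x_j$ in case (i) and $g=\det H(x)$ in case (ii), and where $x_S$ ranges over the $|S|$-dimensional relative interior of the face.

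The key step is to verify that $0$ is a regular value of $G$, and then invoke the parametric transversality theorem (a standard consequence of Sard's theorem): since $G$ is polynomial, hence smooth, for almost every $u$ the slice $\{x_S: G(x_S,u)=0\}$ is a manifold of dimension $|S|-(|S|+1)=-1$, i.e.\ empty. Regularity of the value $0$ follows if, for fixed interior $x$, perturbations of the payoff entries can move the $|S|$ indifference values and the extra value $g$ independently; because each $\partial U/\partial x_i$ is a distinct linear functional of $u$ and there are $2^N$ free payoff parameters against only $|S|+1$ constraints, the $u$-partial of $G$ should already be surjective onto $\R^{|S|+1}$. Unioning over the finitely many faces and the two degeneracy types then gives a measure-zero exceptional set. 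Closedness I would obtain by compactness: if $u_n\to u$ with each game $u_n$ carrying a non-regular equilibrium $x_n$, then along a subsequence $x_n\to x$ in the compact cube, and since the equilibrium conditions and the degeneracy equalities are continuous in $(x,u)$, they pass to the limit, exhibiting a non-regular equilibrium of the game $u$ (a coordinate migrating to the boundary only strengthens degeneracy). I expect the main obstacle to be the surjectivity check for the Hessian case (ii): one must confirm that the multilinear relations tying the payoff entries to the entries of $H(x)$ still leave enough freedom to vary $\det H(x)$ along a direction transverse to the gradient conditions. The abundance of free parameters makes this plausible, but pinning down an explicit perturbation of $u$ that achieves the required independence is where the bulk of the work lies.
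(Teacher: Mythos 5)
First, a point of reference: the paper does not prove this statement at all --- it is imported verbatim from \cite{swenson2017regular} --- so there is no in-paper proof to compare against. Your plan is the standard genericity argument (stratify the cube by support, write the equilibrium and degeneracy conditions as polynomial equations in $(x,u)$, apply parametric transversality/Sard), which is indeed the style of argument behind results of this type. The surjectivity you defer is mostly easy to discharge with explicit perturbations: adding a constant $c_k$ to $u(a_k^1,a_{-k})$ for every $a_{-k}$ changes $U$ by $c_k x_k$, hence moves $\partial U/\partial x_k$ by $c_k$ while leaving every other first partial and every entry of $H$ unchanged; and adding $c$ to $u(a)$ exactly when $a_i=a_i^1$ and $a_j=a_j^1$ changes $U$ by $c\,x_i x_j$, hence moves the single Hessian entry $\partial^2 U/\partial x_i\partial x_j$ by $c$ (at the cost of shifting $\partial U/\partial x_i$ and $\partial U/\partial x_j$, which the first family of perturbations then re-corrects). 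So case (i) and the ``enough freedom'' worry you flag in case (ii) are both fine.

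The genuine gap is elsewhere in case (ii): with $g=\det H(x)$, the value $0$ is \emph{not} in general a regular value of $G$, because the differential of $\det$ at a matrix of corank at least $2$ vanishes identically (it is $V\mapsto \textup{tr}(\textup{adj}(H)V)$, and $\textup{adj}(H)=0$ once $\textup{rank}\,H\le |S|-2$). Points of $G^{-1}(0)$ where $H(x)$ has corank at least $2$ do exist, and at such points no perturbation of $u$ or $x_S$ moves $\det H$ to first order, so the hypothesis of the parametric transversality theorem fails and the dimension count $|S|-(|S|+1)=-1$ does not go through as stated. The standard repair is to replace the scalar last coordinate by the matrix-valued map $(x_S,u)\mapsto\bigl((\partial U/\partial x_i)_{i\in S},\,H(x)\bigr)$, verify via the perturbations above that it is a submersion onto $\R^{|S|}\times\calH$, where $\calH$ is the space of symmetric hollow $|S|\times|S|$ matrices (the diagonal of $H$ vanishes by multilinearity), and then apply transversality stratum by stratum to $\{0\}\times\Sigma_k$, where $\Sigma_k\subset\calH$ is the corank-$k$ stratum; since $\Sigma_k$ has codimension at least $1$ in $\calH$ for $k\ge 1$, each $\{0\}\times\Sigma_k$ has codimension exceeding $|S|$ and its preimage is empty for almost every $u$. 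Your closedness argument is essentially correct, including the observation that a mixing coordinate escaping to the boundary in the limit converts Hessian degeneracy into a quasi-strictness failure.
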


\section{Nash Distributions in \\ Regular Potential Games} \label{sec:ND-and-NE}
\subsection{Nash Distributions and Nash Equilibria}
In regular potential games the set of NE is finite (\hspace{-.01em}\cite{swenson2017regular}, Theorem 2). The following theorem establishes the close relationship between the set of NE and the set $\ND(\lambda)$ in regular potential games.
\begin{theorem} \label{thrm:ND-to-NE}
Suppose $\Gamma$ is a regular potential game. Then the set of NE is finite. Moreover, for $\lambda>0$ sufficiently small there is a one-to-one correspondence between the set of NE and $\ND(\lambda)$ with each element of $\ND(\lambda)$ converging continuously to an associated Nash equilibrium point as $\lambda \to 0$.
\end{theorem}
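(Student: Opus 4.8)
The plan is to realize $\ND(\lambda)$ as the zero set of a smooth map and apply the implicit function theorem (IFT) near each NE, with the restricted Hessian $H(x^*)$ serving as the invertible linearization. The starting observation is that, since $U$ is multilinear, $g_i(x) := U(a_i^1,x_{-i}) - U(a_i^2,x_{-i}) = \partial U/\partial x_i$ depends only on $x_{-i}$, and $\partial g_i/\partial x_j = \partial^2 U/\partial x_i\partial x_j = H(x)_{ij}$. Writing $\sigma(z) = (1+e^{-z})^{-1}$, the definition \eqref{eq:BR-def} becomes $\BRhat_i(x) = \sigma(\tfrac1\lambda g_i(x))$, so $x\in\ND(\lambda)$ iff $x_i = \sigma(\tfrac1\lambda g_i(x))$ for all $i$, equivalently $g_i(x) = \lambda\,\sigma^{-1}(x_i)$ on the mixing coordinates. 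Fix a regular NE $x^*$ and adopt the paper's ordering, so players $i \in M := \{1,\dots,\tilde N\}$ mix ($x_i^*\in(0,1)$, $g_i(x^*)=0$) and players $i\in P := \{\tilde N+1,\dots,N\}$ are pure with $x_i^*=1$ and, by quasi-strictness, $g_i(x^*)>0$. Finiteness of $NE$ is quoted from \cite{swenson2017regular}; alternatively it follows from the same Jacobian argument below, which shows each regular NE is isolated.

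First I would eliminate the pure coordinates. On a small neighborhood of $x^*$ one has $g_i(x) > c > 0$ for $i \in P$, so the derivative of the map $x_P \mapsto (\sigma(\tfrac1\lambda g_i(x_M,x_P)))_{i\in P}$ is of order $\tfrac1\lambda e^{-c/\lambda}$ and hence a contraction for small $\lambda$. This gives a unique fixed point $x_P = h(x_M,\lambda)$ near $\mathbf 1$, with $h(x_M,\lambda)\to\mathbf 1$ and $D_{x_M}h(x_M,\lambda)\to 0$ as $\lambda\to0$. The corrections $1 - h_i$ are of order $e^{-g_i/\lambda}$, i.e.\ flat at $\lambda=0$, so $h$ extends to a $C^1$ (indeed $C^\infty$) function on a neighborhood of $(x_M^*,0)$ with $h(\cdot,0)\equiv\mathbf1$ and vanishing first derivatives there.

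Substituting yields the reduced system $\Xi_i(x_M,\lambda) := g_i(x_M,h(x_M,\lambda)) - \lambda\,\sigma^{-1}(x_i) = 0$, $i\in M$, which is $C^1$ near $(x_M^*,0)$. At this point $\Xi(x_M^*,0)=0$, and because $D_{x_M}h$ and the $\lambda\,\sigma^{-1}$ term both drop out at $\lambda=0$, the Jacobian $D_{x_M}\Xi(x_M^*,0)$ equals $(\partial^2U/\partial x_i\partial x_j(x^*))_{i,j\in M} = H(x^*)$, invertible by regularity. The IFT then produces a unique $C^1$ curve $x_M=\xi(\lambda)$ with $\xi(0)=x_M^*$, and $x(\lambda) := (\xi(\lambda), h(\xi(\lambda),\lambda))$ is by construction a Nash distribution converging continuously to $x^*$. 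This supplies, for small $\lambda$, a Nash distribution near every NE and gives local uniqueness within a small neighborhood $V_{x^*}$.

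It remains to show the correspondence is one-to-one and onto, i.e.\ that for small $\lambda$ every element of $\ND(\lambda)$ lies in exactly one $V_{x^*}$. Choosing the $V_{x^*}$ disjoint (possible by finiteness), injectivity and local uniqueness are immediate; for surjectivity I argue by compactness. If some $x^{(k)}\in\ND(\lambda_k)$ with $\lambda_k\to0$ stayed outside $\bigcup_{x^*} V_{x^*}$, a subsequential limit $\hat x$ would satisfy, coordinatewise, $\hat x_i = 1$ (resp.\ $0$) whenever $g_i(\hat x)>0$ (resp.\ $<0$) and be unconstrained when $g_i(\hat x)=0$---exactly the best-response conditions---so $\hat x$ would be a NE lying outside every $V_{x^*}$, a contradiction. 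Hence $\ND(\lambda)=\bigsqcup_{x^*}(\ND(\lambda)\cap V_{x^*})$ with each summand a singleton, yielding the bijection and the claimed continuous convergence. The main obstacle is the boundary degeneracy: the fixed-point map is singular at $\lambda=0$ (the logit blows up on the pure coordinates), so the real content is the pure/mixing split together with the flatness estimate for $1-h_i$, which is precisely what lets the reduced system extend $C^1$ to $\lambda=0$ with linearization exactly $H(x^*)$.
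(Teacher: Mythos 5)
Your argument is correct and complete. Note, though, that the paper does not actually prove this theorem: it quotes finiteness of the NE set from \cite{swenson2017regular} and offers only a one-sentence sketch, namely that $\BRhat_i$ maximizes an entropy-perturbed potential and that the critical points of this perturbation converge to the NE set in regular games. Your route is the natural way to flesh that sketch out, but it proceeds directly through the fixed-point equations $x_i=\sigma(g_i(x)/\lambda)$ rather than through the variational characterization. The substantive content you supply --- and which the paper's sketch glosses over --- is the treatment of the pure coordinates, where the logit (equivalently, the derivative of the entropy term) is singular at the boundary, so the implicit function theorem cannot be applied to the full system at $\lambda=0$. Your contraction/flatness argument showing $1-h_i=O(e^{-c/\lambda})$, so that the reduced system on the mixing coordinates extends $C^1$ to $\lambda=0$ with linearization exactly the restricted Hessian $H(x^*)$, is precisely where quasi-strictness and invertibility of $H(x^*)$ (i.e., regularity) enter; it also mirrors the block structure the paper later exploits in the proof of Proposition \ref{prop:ND-hyperbolic}. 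The compactness argument for surjectivity is genuinely needed and correct: without it one only gets a Nash distribution near each NE, not that $\ND(\lambda)$ contains nothing else. Two minor points deserve a sentence in a final write-up: the implicit function theorem is invoked at the boundary point $\lambda=0$ of the parameter domain, so one should either extend $h$ by $\mathbf 1$ for $\lambda\le 0$ or cite a one-sided version; and for local uniqueness one should note that any element of $\ND(\lambda)$ in the chosen neighborhood automatically has its pure block in the basin of the contraction and its mixing block in the IFT neighborhood.
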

We remark that, implicit in the above theorem is the fact that $\ND(\lambda)$ is finite for $\lambda$ sufficiently small (see also \cite{hofbauer2002global}, Theorem 4.3 and Theorem \ref{prop:ND-hyperbolic} below).
A proof of this result is omitted for brevity. However, the result follows readily by observing that the smoothed best response $\BRhat_i(x)$ is obtained by maximizing $U(x_i,x_{-i}) + \lambda\sum_{i=1}^n x_i\log(x_i)$. As $\lambda\to 0$, the critical points of this perturbed function converge to the set of NE in regular games.

By construction, all Nash distributions lie in the interior of the strategy set. However, in an abuse of terminology, we will use the following nomenclature.
\begin{definition}
We say that a Nash distribution is a \emph{pure-strategy Nash distribution} if it converges to a pure-strategy Nash equilibrium as $\lambda\to 0$.
\end{definition}

\subsection{Hyperbolicity of Nash Distributions}
Theorem \ref{thrm:sandholm} shows that in games with hyperbolic Nash distributions, the limit points of smooth FP are, almost surely, linearly stable Nash distributions. The following theorem shows that in regular potential games, all Nash distributions are hyperbolic and only pure-strategy Nash distributions are linearly stable.
Thus, together with Theorem \ref{thrm:sandholm} and Theorem \ref{thrm:reg-pot-games}, the following proposition will immediately imply the main result of the paper (Theorem \ref{thrm:main-result} below).
\begin{proposition} \label{prop:ND-hyperbolic}
If $\Gamma$ is a regular $N\times 2$ potential game, then for all $\lambda>0$ sufficiently small, all Nash distributions are hyperbolic. Moreover, a Nash distribution is linearly stable if and only if it is a pure-strategy Nash distribution.
\end{proposition}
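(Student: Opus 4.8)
The plan is to compute the Jacobian of the vector field $F(\vx) = \BRhat(\vx) - \vx$ at a Nash distribution and read off its spectrum in the limit $\lambda\to 0$. First I would write $\BRhat_i(x) = \sigma\big(\tfrac1\lambda\,\partial_{x_i}U(x)\big)$, where $\sigma(t)=(1+e^{-t})^{-1}$ is the logistic function and I have used that $\partial_{x_i}U(x) = U(a_i^1,x_{-i}) - U(a_i^2,x_{-i})$ by multilinearity of $U$. Two structural facts are immediate and central: $\BRhat_i$ does not depend on $x_i$, and, since $U$ is affine in each coordinate, the Hessian $\mathcal H(x)$ of $U$ has a vanishing diagonal. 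Differentiating, using $\sigma' = \sigma(1-\sigma)$ together with the fixed-point identity $\sigma(\tfrac1\lambda\partial_{x_i}U) = x_i$ valid at any $x\in\ND(\lambda)$, I obtain, with $\Lambda(x) := \myDiag(x_1(1-x_1),\dots,x_N(1-x_N))$, the clean expression
\[
DF(x) = \tfrac1\lambda\,\Lambda(x)\,\mathcal H(x) - I .
\]
Consequently the eigenvalues of $DF(x)$ are exactly $\{\mu/\lambda - 1 : \mu\in\mathrm{spec}(\Lambda(x)\mathcal H(x))\}$, so the whole analysis reduces to locating the spectrum of $\Lambda(x)\mathcal H(x)$.

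Next I would study the $\lambda\to 0$ asymptotics along the ND branch $x=x(\lambda)$ which, by Theorem \ref{thrm:ND-to-NE}, converges to a fixed equilibrium $x^*$. Partition the players into the mixing set $M=\{1,\dots,\tilde N\}$ and the pure set $P$. For a pure coordinate, quasi-strictness forces $\partial_{x_i}U(x^*)$ to be bounded away from $0$, so $x_i(1-x_i) = O(e^{-\delta/\lambda})$ is exponentially small; for a mixing coordinate, $x_i(1-x_i)\to x_i^*(1-x_i^*)>0$. Hence $\Lambda(x)\mathcal H(x)$ converges to the block upper-triangular matrix whose diagonal blocks are $D_M^*\,H(x^*)$ and $0$, where $D_M^*$ is the positive-definite diagonal limit on $M$ and $H(x^*)$ is the restricted Hessian of the excerpt; its spectrum is $\mathrm{spec}(D_M^*H(x^*))\cup\{0\}$.

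The decisive algebraic observation comes next. At a mixed equilibrium $\tilde N\ge 1$; in fact $\tilde N\ge 2$, since $\tilde N=1$ would make $H(x^*)$ the $1\times 1$ matrix $\partial_{x_i}^2 U=0$, contradicting the invertibility required by regularity. The restricted Hessian $H(x^*)$ is symmetric and, by multilinearity, has zero diagonal, hence zero trace; being invertible it cannot be sign-definite, so it possesses at least one strictly positive eigenvalue. Multiplying by the positive-definite diagonal $D_M^*$ preserves inertia (Sylvester's law), so $D_M^*H(x^*)$ has a strictly positive real eigenvalue $\mu_+$. By continuity of the spectrum, $\Lambda(x)\mathcal H(x)$ then carries an eigenvalue $\mu(\lambda)\to\mu_+>0$, producing an eigenvalue $\mu(\lambda)/\lambda - 1$ of $DF$ with real part at least $\mu_+/(2\lambda) - 1\to +\infty$; thus a mixed Nash distribution is never linearly stable. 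For a pure Nash distribution ($\tilde N=0$) the entire matrix $\Lambda(x)\mathcal H(x)$ has norm $O(e^{-\delta/\lambda})$, so every eigenvalue $\mu$ yields $\mathrm{Re}(\mu/\lambda - 1)\le |\mu|/\lambda - 1\to -1$, and the point is linearly stable. This establishes the ``linearly stable if and only if pure-strategy'' dichotomy.

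Finally, hyperbolicity (nonsingularity of $DF(x)$) is equivalent to $\lambda\notin\mathrm{spec}(\Lambda(x)\mathcal H(x))$, which I would verify by a Schur complement. Writing $\Lambda(x)\mathcal H(x)$ in $M/P$ block form with top-left block $A_{MM}\to D_M^*H(x^*)$ invertible and with the two $P$-row blocks of order $O(e^{-\delta/\lambda})$, eliminating the $M$-block shows that $\det(\Lambda(x)\mathcal H(x)-\lambda I)$ factors as $\det(A_{MM}-\lambda I)$ times $\det\big(-\lambda I + O(e^{-\delta/\lambda})\big) = (-\lambda)^{|P|}(1+o(1))$, which is nonzero for all small $\lambda>0$. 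I expect the main obstacle to be exactly this coupling of the $\lambda$-dependent evaluation point $x(\lambda)$ with the spectral estimates: the ``small'' eigenvalues and the threshold $\lambda$ both tend to $0$, so mere continuity of the spectrum does not separate them, and I must invoke the exponential bound $x_i(1-x_i)=O(e^{-\delta/\lambda})=o(\lambda)$ at pure coordinates. Deriving that exponential bound rigorously, from quasi-strictness and the continuous convergence in Theorem \ref{thrm:ND-to-NE}, is the technical heart of the argument.
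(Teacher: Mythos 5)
Your proposal is correct and follows essentially the same route as the paper: differentiate $\BRhat$ using the fixed-point identity to get $DF(x^\lambda)=\tfrac1\lambda\Lambda(x^\lambda)\mathcal H(x^\lambda)-I$ (the paper's equation \eqref{eq:Jacobian} in block form), kill the pure-coordinate rows via quasi-strictness, and read off the spectrum as $\lambda\to0$, with stability at pure Nash distributions and a positive eigenvalue of the restricted Hessian forcing instability at mixed ones. Your zero-trace argument for why the invertible $H(x^*)$ must have a positive eigenvalue, and your explicit separation of the exponentially small eigenvalues from the threshold $\lambda$, supply details the paper states only implicitly, but the underlying argument is the same.
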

\begin{proof}
Let
$$F_i(x) := \BRhat_i(x) - x_i,$$
and let $F(x) = (F_i(x))_{i=1}^N$ represent the right hand side of \eqref{eq:BR-perturbed}.
Note that we may express $U(x)$ as $U(x_i,x_{-i}) = x_iU(a_i^1,x_{-i}) + (1-x_i)U(a_i^2,x_{-i})$. Hence,
$$
\frac{\partial U(x)}{\partial x_i} = U(a_i^1,x_{-i}) - U(a_i^2,x_{-i}),
$$
and
\begin{equation} \label{eq:F-to-partialU}
\frac{\partial^2 U(x)}{\partial x_i\partial x_j} = \frac{\partial U(a_i^1,x_{-i})}{\partial x_j} - \frac{\partial U(a_i^2,x_{-i})}{\partial x_j}.
\end{equation}

We now compute $\frac{\partial \BRhat(x)}{\partial x_j}$. For $j=i$ we have $\frac{\partial \BRhat(x)}{\partial x_j} = 0 = \frac{\partial^2 U(x)}{\partial x_i^2}$. For $j\not = i$ we have
\begin{align}
\frac{\BRhat_i(x)}{\partial x_j} = &
\frac{1}{\lambda}\frac{\exp(\frac{1}{\lambda} U(a_i^k,x_{-i}))}{\sum_{k=1,2} \exp(\frac{1}{\lambda} U(a_i^k,x_{-i}))}\frac{\partial U(a_i^1,x_{-i})}{\partial x_j}\\
& - \frac{1}{\lambda}
\frac{\exp(\frac{1}{\lambda} U(a_i^k,x_{-i}))}{\left(\sum_{k=1,2} \exp(\frac{1}{\lambda} U(a_i^k,x_{-i}))\right)^2}\\
& \times \bigg(\exp(\frac{1}{\lambda} U(a_i^1,x_{-i}))\frac{\partial U(a_i^1,x_{-i})}{\partial x_j}\\
& \quad+ \exp(\frac{1}{\lambda} U(a_i^2,x_{-i}))\frac{\partial U(a_i^2,x_{-i})}{\partial x_j} \bigg)
\end{align}
Suppose henceforth that $x^\lambda\in \ND(\lambda)$, so that $x^\lambda = \BRhat(x^\lambda)$, and that $x^*$ is the NE associated with $x^\lambda$ so that $x^\lambda\to x^*$ as $\lambda\to 0$. From the above we see that
\begin{align}
\frac{\BRhat_i(x^\lambda)}{\partial x_j} = & \frac{1}{\lambda}x_i^\lambda\frac{\partial U(a_i^1,x_{-i}^\lambda)}{\partial x_j}\\
& - \frac{1}{\lambda}x_i^\lambda\left(x_i^\lambda\frac{\partial U(a_i^1,x_{-i}^\lambda)}{\partial x_j} + (1-x_i^\lambda)\frac{\partial U(a_i^1,x_{-i}^\lambda)}{\partial x_j} \right)\\
& = \frac{1}{\lambda}x_i^\lambda(1-x_i^\lambda)\left(\frac{\partial U(a_i^1,x_{-i}^\lambda)}{\partial x_j} - \frac{\partial U(a_i^2,x_{-i}^\lambda)}{\partial x_j} \right).
\end{align}
From \eqref{eq:F-to-partialU} we see that
$$
\frac{\BRhat_i(x^\lambda)}{\partial x_j} = \frac{1}{\lambda}x_i^\lambda(1-x_i^\lambda)\frac{\partial^2 U(x^\lambda)}{\partial x_i\partial x_j}
$$
Without loss of generality, assume that the set of players is ordered so that players $1,\ldots,\tilde N$ play mixed strategies.
Let $R_\lambda$ be the diagonal $\tilde N\times \tilde N$ matrix given by
$$
R_{\lambda} := \myDiag(x_1^\lambda(1-x_1^\lambda),\ldots,x_{\tilde N}^\lambda(1-x_{\tilde N}^\lambda)).
$$
Define the matrices
$$
B_\lambda := \left( x_i^\lambda(1-x_i^\lambda)\frac{\partial^2 U(x^\lambda)}{\partial x_i\partial x_j}\right)_{\substack{i=1,\ldots,\tilde N\\ j=\tilde N+1,\ldots,N}},
$$
and
$$
C_\lambda := \left( x_i^\lambda(1-x_i^\lambda)\frac{\partial^2 U(x^\lambda)}{\partial x_i\partial x_j}\right)_{i,j=\tilde N+1,\ldots,N},
$$
and observe that we have
\begin{equation} \label{eq:Jacobian}
DF(x^\lambda) = \frac{1}{\lambda}
\begin{pmatrix}
R_\lambda H(x^\lambda) & B_\lambda\\
B_\lambda^T & C_\lambda
\end{pmatrix} - I_N,
\end{equation}
where
$I_N$ is the $N\times N$ identity matrix, and $H(x^\lambda)$ (defined w.r.t. $x^*$) is defined in \eqref{eq:def-H}.

Without loss of generality, assume that the pure strategy set is ordered so that $x_i^* > 0$ for all $i$.
Since $\Gamma$ is regular, $x^*$ is a strict NE and there exists constants $c_1$ and $c_2$ such that $U(a_i^1,x_{-i}) \geq c_1 > c_2 \geq U(a_i^2,x_{-i})$ for all $x$ in a ball about $x^*$. For $i > \tilde N$, by our ordering of the pure strategy set we have $x_i^* = 1$, and, since $x_i^\lambda = \BRhat_i(x^\lambda)$, using \eqref{eq:BR-def} we have $\frac{1}{\lambda}(1-x_i^\lambda)\to 0$ and $x_i^\lambda\to 1$ as $\lambda \to \infty$. Thus, $\frac{1}{\lambda}B_\lambda$ and $\frac{1}{\lambda}C_\lambda$ converge entrywise to zero.

If $x^*$ is a pure NE, then we have $DF(x^\lambda) = \frac{1}{\lambda}C_\lambda - I_N$, and taking $\lambda>0$ sufficiently small, each eigenvalue of $DF(x^\lambda)$ may be brought arbitrarily close to -1. Thus, for all $\lambda>0$ sufficiently small, $x^\lambda$ is hyperbolic and stable under \eqref{eq:BR-perturbed}.

If $x^*$ is a mixed NE, then, since $\Gamma$ is regular, $H(x^*)$ is invertible and has at least one positive eigenvalue. Since $H(x^\lambda) \to H(x^*)$ as $\lambda \to 0$, taking $\lambda\to 0$ we see that $DF(x^\lambda)$ is invertible and has at least one positive eigenvalue for all $\lambda>0$ sufficiently small (\hspace{-.01em}\cite{kato2013perturbation}, Theorem II.6.1). Thus, for all $\lambda>0$ sufficiently small, $x^\lambda$ is hyperbolic and unstable under \eqref{eq:BR-perturbed}.
\end{proof}

\subsection{Main Result}
Finally, we state the main result of the paper.
\begin{theorem} \label{thrm:main-result}
In almost all $N\times 2$ potential games, for all $\lambda>0$ sufficiently small, smooth FP converges to a pure-strategy Nash distribution with probability 1.
\end{theorem}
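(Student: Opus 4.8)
The plan is to assemble the statement from the three results already established, using Proposition \ref{prop:ND-hyperbolic}, Theorem \ref{thrm:sandholm}, and Theorem \ref{thrm:reg-pot-games} essentially as black boxes; the proof is therefore one of organization rather than fresh analysis. First I would reduce the ``almost all games'' claim to a claim about a single fixed regular game. By Theorem \ref{thrm:reg-pot-games} the set of non-regular $N\times 2$ potential games is a closed, measure-zero subset of $\R^{2N}$, so it suffices to show that the convergence conclusion holds for \emph{every} regular $N\times 2$ potential game. The set of games for which the conclusion fails is then contained in the non-regular set; since the property holds on all regular games one may take the (closed, measure-zero) non-regular set itself as the exceptional set, which is exactly what the paper's definition of ``almost all'' requires.

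Next I would fix an arbitrary regular $N\times 2$ potential game $\Gamma$ and verify the hypothesis of Theorem \ref{thrm:sandholm}. The rest points of the mean-field ODE \eqref{eq:BR-perturbed} are precisely the fixed points of $\BRhat$, i.e. the Nash distributions $\ND(\lambda)$, and the right-hand side $F(x) = \BRhat(x) - x$ is $C^1$ since $\BRhat$ is smooth. By Proposition \ref{prop:ND-hyperbolic} there is a threshold $\lambda_0 > 0$, depending on $\Gamma$, such that for every $\lambda \in (0,\lambda_0)$ all Nash distributions are hyperbolic. Hence for each such $\lambda$ the hypothesis of Theorem \ref{thrm:sandholm} is met, and that theorem gives that, with probability one, the limit set of the smooth FP iterates \eqref{eq:main-algo-1}--\eqref{eq:main-algo-2} consists of linearly stable Nash distributions.

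I would then identify the stable set and upgrade set-containment to genuine convergence. By the second assertion of Proposition \ref{prop:ND-hyperbolic}, for $\lambda \in (0,\lambda_0)$ a Nash distribution is linearly stable if and only if it is a pure-strategy Nash distribution, so with probability one smooth FP accumulates only at pure-strategy Nash distributions. To conclude convergence to a single such point I would use finiteness: hyperbolic rest points are isolated and $\Delta$ is compact (and forward-invariant under \eqref{eq:BR-perturbed} since $\BRhat$ maps into the interior), so $\ND(\lambda)$ is finite for $\lambda < \lambda_0$ — as also recorded in Theorem \ref{thrm:ND-to-NE} — and the linearly stable Nash distributions form a finite collection of isolated points. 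Because the limit set of the trajectory is connected (equivalently, internally chain transitive) and meets only isolated points, it must be a singleton; thus with probability one the iterates converge to a single pure-strategy Nash distribution.

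The genuinely nontrivial content has already been discharged by the spectral computation of $DF(x^\lambda)$ in Proposition \ref{prop:ND-hyperbolic} and by the stochastic-approximation machinery behind Theorem \ref{thrm:sandholm}, so I expect the remaining work to be light. The two places where I would take care — and where a gap could hide — are the measure-theoretic bookkeeping of the first step, namely confirming the exceptional set truly meets the paper's ``closed and measure-zero'' definition via its containment in the non-regular set of Theorem \ref{thrm:reg-pot-games}, and the passage from ``limit set among the stable Nash distributions'' to convergence to one point, which rests on those points being isolated. Neither is deep, but both warrant an explicit sentence in the write-up.
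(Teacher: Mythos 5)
Your proposal is correct and follows exactly the route the paper takes: the paper's entire proof is the single sentence that the theorem ``follows immediately from Theorem \ref{thrm:sandholm}, Theorem \ref{thrm:reg-pot-games}, and Proposition \ref{prop:ND-hyperbolic},'' and you have simply made the same chain of implications explicit (including the reduction to regular games and the passage from a limit set of isolated stable points to convergence to a single point). The extra care you take on the measure-theoretic bookkeeping and the singleton-limit step is sound and only fills in details the paper leaves implicit.
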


Theorem \ref{thrm:main-result} follows immediately from Theorem \ref{thrm:sandholm}, Theorem \ref{thrm:reg-pot-games}, and Proposition \ref{prop:ND-hyperbolic}.

\begin{remark}
We remark that by Theorem \ref{thrm:ND-to-NE}, Theorem \ref{thrm:main-result} implies that any limit point of smooth FP may be brought arbitrarily close to the set of pure-strategy NE by taking $\lambda$ sufficiently small.
\end{remark}

\section{Conclusions} \label{sec:conclusion}
Game-theoretic learning dynamics are typically known to converge to the set of NE in potential games. However, the question of convergence to pure vs mixed strategies is often unclear in algorithms that evolve in the mixed strategy space. In this paper we have considered the case of smooth FP in $N\times 2$ potential games and shown convergence to the neighborhood of pure NE with probability 1. The key enabler of our result and analysis technique was notion of a \emph{regular potential game} \cite{swenson2017regular}. We hypothesize that analogous results may be obtained for other learning algorithms in potential games by applying similar techniques.
\bibliographystyle{IEEEtran}
\bibliography{myRefs}

\end{document}